\theoremstyle{plain}
\newtheorem{thm}{Theorem}
\numberwithin{equation}{section}
\numberwithin{figure}{section}
\numberwithin{table}{section}
\numberwithin{algorithm}{section}
\DeclareMathOperator{\divi}{div}
\DeclareMathOperator{\md}{{\mathrm{mod}}}
\title{Minimizing communication in the multidimensional FFT}
\author{Thomas Koopman\thanks{Software Science, Radboud University,
P.O. Box 9010, 6500 GL Nijmegen, The Netherlands (\texttt{thomas.koopman@ru.nl}).}
    and Rob H. Bisseling\thanks{Mathematical Institute, Utrecht University,
P.O. Box 80010, 3508 TA Utrecht, The Netherlands (\texttt{R.H.Bisseling@uu.nl}).}}
\tikzset{
  redSquare/.pic={
    \draw[fill=red] (0,0,0) rectangle (1,1,0);
  }
}
\tikzset{
  blueSquare/.pic={
    \draw[fill=blue] (0,0,0) rectangle (1,1,0);
  }
}
\tikzset{
  yellowSquare/.pic={
    \draw[fill=yellow] (0,0,0) rectangle (1,1,0);
  }
}
\tikzset{
  greenSquare/.pic={
    \draw[fill=green] (0,0,0) rectangle (1,1,0);
  }
}
\tikzset{
  redCube/.pic={
    \draw[fill=red!20] (0,1,0) -- (0,1,1) -- (1,1,1) -- (1,1,0);
    \draw[fill=red!50] (1,0,0) -- (1,0,1) -- (1,1,1) -- (1,1,0);
    \draw[fill=red] (0,0,0) rectangle (1,1,0);
  }
}
\tikzset{
  blueCube/.pic={
    \draw[fill=blue!20] (0,1,0) -- (0,1,1) -- (1,1,1) -- (1,1,0);
    \draw[fill=blue!50] (1,0,0) -- (1,0,1) -- (1,1,1) -- (1,1,0);
    \draw[fill=blue] (0,0,0) rectangle (1,1,0);
  }
}
\tikzset{
  yellowCube/.pic={
    \draw[fill=yellow!20] (0,1,0) -- (0,1,1) -- (1,1,1) -- (1,1,0);
    \draw[fill=yellow!50] (1,0,0) -- (1,0,1) -- (1,1,1) -- (1,1,0);
    \draw[fill=yellow] (0,0,0) rectangle (1,1,0);
  }
}
\tikzset{
  greenCube/.pic={
    \draw[fill=green!20] (0,1,0) -- (0,1,1) -- (1,1,1) -- (1,1,0);
    \draw[fill=green!50] (1,0,0) -- (1,0,1) -- (1,1,1) -- (1,1,0);
    \draw[fill=green] (0,0,0) rectangle (1,1,0);
  }
}
\tikzset{
  purpleCube/.pic={
    \draw[fill=purple!20] (0,1,0) -- (0,1,1) -- (1,1,1) -- (1,1,0);
    \draw[fill=purple!50] (1,0,0) -- (1,0,1) -- (1,1,1) -- (1,1,0);
    \draw[fill=purple] (0,0,0) rectangle (1,1,0);
  }
}
\tikzset{
  orangeCube/.pic={
    \draw[fill=orange!20] (0,1,0) -- (0,1,1) -- (1,1,1) -- (1,1,0);
    \draw[fill=orange!50] (1,0,0) -- (1,0,1) -- (1,1,1) -- (1,1,0);
    \draw[fill=orange] (0,0,0) rectangle (1,1,0);
  }
}
\tikzset{
  pinkCube/.pic={
    \draw[fill=pink!20] (0,1,0) -- (0,1,1) -- (1,1,1) -- (1,1,0);
    \draw[fill=pink!50] (1,0,0) -- (1,0,1) -- (1,1,1) -- (1,1,0);
    \draw[fill=pink] (0,0,0) rectangle (1,1,0);
  }
}
\tikzset{
  blackCube/.pic={
    \draw[fill=black!20] (0,1,0) -- (0,1,1) -- (1,1,1) -- (1,1,0);
    \draw[fill=black!50] (1,0,0) -- (1,0,1) -- (1,1,1) -- (1,1,0);
    \draw[fill=black] (0,0,0) rectangle (1,1,0);
  }
}
\newcommand{\cyclicOneD}[1]{
\begin{tikzpicture}[z = {(0.4, 0.4)}]
\path
\foreach \a in {0, ..., #1} {
  	(4 * \a, 0)pic{greenSquare}
  	(4 * \a + 1, 0)pic{yellowSquare}
    (4 * \a + 2, 0)pic{blueSquare}
    (4 * \a + 3, 0)pic{redSquare}
}
;
\end{tikzpicture}
}
\newcommand{\cyclicTwoD}[2]{
\begin{tikzpicture}[z = {(0.4, 0.4)}]
\path
\foreach \a in {0, ..., #1} {
	\foreach \b in {0, ..., #2} {
    	(2 * \a, 2 * \b)pic{greenSquare}
    	(2 * \a + 1, 2 * \b)pic{yellowSquare}
		(2 * \a, 2 * \b + 1)pic{blueSquare}
		(2 * \a + 1, 2 * \b + 1)pic{redSquare}
	}
}
;
\end{tikzpicture}
}
\newcommand{\cyclicThreeD}[3]{
\begin{tikzpicture}[z = {(0.34, 0.4)}]
\path
\foreach \c in {#3, ..., 0} {
	\foreach \a in {0, ..., #1} {
		\foreach \b in {0, ..., #2} {
    		(2 * \a, 2 * \b, 2 * \c + 1)pic{orangeCube}
    		(2 * \a + 1, 2 * \b, 2 * \c + 1)pic{yellowCube}
    		(2 * \a, 2 * \b + 1, 2 * \c + 1)pic{blueCube}
			(2 * \a + 1, 2 * \b + 1, 2 * \c + 1)pic{redCube}
    	}
    }
    \foreach \a in {0, ..., #1} {
    	\foreach \b in {0, ..., #2} {
        	(2 * \a, 2 * \b, 2 * \c)pic{greenCube}
    		(2 * \a + 1, 2 * \b, 2 * \c)pic{purpleCube}
    		(2 * \a, 2 * \b + 1, 2 * \c)pic{pinkCube}
			(2 * \a + 1, 2 * \b + 1, 2 * \c)pic{blackCube}
		}
	}
};
\end{tikzpicture}
}
\newcommand{\drawOffsetCube}[7] {
\foreach \c in {#3, ..., 0} {
	\foreach \a in {0, ..., #1} {
		\foreach \b in {0, ..., #2} {
    		(\a + #4*#1 + #4, \b + #5*#2 + #5, \c + #6*#3 + #6)pic{#7}
    	}
    }
}
}
\newcommand{\slabThreeDz}[3]{
\begin{tikzpicture}[z = {(0.34, 0.4)}]
\path
\drawOffsetCube{#1}{#2}{#3}{0}{0}{7}{blackCube}
\drawOffsetCube{#1}{#2}{#3}{0}{0}{6}{yellowCube}
\drawOffsetCube{#1}{#2}{#3}{0}{0}{5}{blueCube}
\drawOffsetCube{#1}{#2}{#3}{0}{0}{4}{redCube}
\drawOffsetCube{#1}{#2}{#3}{0}{0}{3}{greenCube}
\drawOffsetCube{#1}{#2}{#3}{0}{0}{2}{purpleCube}
\drawOffsetCube{#1}{#2}{#3}{0}{0}{1}{pinkCube}
\drawOffsetCube{#1}{#2}{#3}{0}{0}{0}{orangeCube};
\end{tikzpicture}
}
\newcommand{\slabThreeDx}[3]{
\begin{tikzpicture}[z = {(0.34, 0.4)}]
\path
\drawOffsetCube{#1}{#2}{#3}{0}{0}{0}{blackCube}
\drawOffsetCube{#1}{#2}{#3}{1}{0}{0}{yellowCube}
\drawOffsetCube{#1}{#2}{#3}{2}{0}{0}{blueCube}
\drawOffsetCube{#1}{#2}{#3}{3}{0}{0}{redCube}
\drawOffsetCube{#1}{#2}{#3}{4}{0}{0}{greenCube}
\drawOffsetCube{#1}{#2}{#3}{5}{0}{0}{purpleCube}
\drawOffsetCube{#1}{#2}{#3}{6}{0}{0}{pinkCube}
\drawOffsetCube{#1}{#2}{#3}{7}{0}{0}{orangeCube};
\end{tikzpicture}
}
\newcommand{\slabThreeDy}[3]{
\begin{tikzpicture}[z = {(0.34, 0.4)}]
\path
\drawOffsetCube{#1}{#2}{#3}{0}{0}{0}{blackCube}
\drawOffsetCube{#1}{#2}{#3}{0}{1}{0}{yellowCube}
\drawOffsetCube{#1}{#2}{#3}{0}{2}{0}{blueCube}
\drawOffsetCube{#1}{#2}{#3}{0}{3}{0}{redCube}
\drawOffsetCube{#1}{#2}{#3}{0}{4}{0}{greenCube}
\drawOffsetCube{#1}{#2}{#3}{0}{5}{0}{purpleCube}
\drawOffsetCube{#1}{#2}{#3}{0}{6}{0}{pinkCube}
\drawOffsetCube{#1}{#2}{#3}{0}{7}{0}{orangeCube};
\end{tikzpicture}
}
\newcommand{\pencilThreeDyz}[3]{
\begin{tikzpicture}[z = {(0.34, 0.4)}]
\path
\drawOffsetCube{#1}{#2}{#3}{0}{0}{3}{blackCube}
\drawOffsetCube{#1}{#2}{#3}{0}{0}{2}{yellowCube}
\drawOffsetCube{#1}{#2}{#3}{0}{0}{1}{blueCube}
\drawOffsetCube{#1}{#2}{#3}{0}{0}{0}{redCube}
\drawOffsetCube{#1}{#2}{#3}{0}{1}{3}{greenCube}
\drawOffsetCube{#1}{#2}{#3}{0}{1}{2}{purpleCube}
\drawOffsetCube{#1}{#2}{#3}{0}{1}{1}{pinkCube}
\drawOffsetCube{#1}{#2}{#3}{0}{1}{0}{orangeCube};
\end{tikzpicture}
}
\newcommand{\pencilThreeDxy}[3]{
\begin{tikzpicture}[z = {(0.34, 0.4)}]
\path
\drawOffsetCube{#1}{#2}{#3}{0}{0}{0}{blackCube}
\drawOffsetCube{#1}{#2}{#3}{0}{1}{0}{yellowCube}
\drawOffsetCube{#1}{#2}{#3}{0}{2}{0}{blueCube}
\drawOffsetCube{#1}{#2}{#3}{0}{3}{0}{redCube}
\drawOffsetCube{#1}{#2}{#3}{1}{0}{0}{greenCube}
\drawOffsetCube{#1}{#2}{#3}{1}{1}{0}{purpleCube}
\drawOffsetCube{#1}{#2}{#3}{1}{2}{0}{pinkCube}
\drawOffsetCube{#1}{#2}{#3}{1}{3}{0}{orangeCube};
\end{tikzpicture}
}
\newcommand{\pencilThreeDxz}[3]{
\begin{tikzpicture}[z = {(0.34, 0.4)}]
\path
\drawOffsetCube{#1}{#2}{#3}{0}{0}{3}{blackCube}
\drawOffsetCube{#1}{#2}{#3}{0}{0}{2}{yellowCube}
\drawOffsetCube{#1}{#2}{#3}{0}{0}{1}{blueCube}
\drawOffsetCube{#1}{#2}{#3}{0}{0}{0}{redCube}
\drawOffsetCube{#1}{#2}{#3}{1}{0}{3}{greenCube}
\drawOffsetCube{#1}{#2}{#3}{1}{0}{2}{purpleCube}
\drawOffsetCube{#1}{#2}{#3}{1}{0}{1}{pinkCube}
\drawOffsetCube{#1}{#2}{#3}{1}{0}{0}{orangeCube};
\end{tikzpicture}
}
\newcommand{\Axes}[3]{
\begin{tikzpicture}[z = {(0.34, 0.4)}]
\draw [->,thick] (0,0) -- (1,0) ;
\draw [->,thick] (0,0) -- (0,1) ; 
\draw [->,thick] (0,0) -- (0.34,0.4); 
\node [right] at (1,0) {$x$} ;
\node [right] at (0.34,0.4) {$y$} ;
\node [above] at (0,1) {$z$} ;
\end{tikzpicture}
}
\begin{document}
\maketitle

\begin{abstract}
We present a parallel algorithm for the fast Fourier transform (FFT) in higher dimensions. This algorithm generalizes the cyclic-to-cyclic one-dimensional parallel algorithm to a cyclic-to-cyclic  multidimensional parallel algorithm while retaining the property of needing only a single all-to-all communication step. This is under the constraint that we use at most $\sqrt{N}$ processors for an FFT on an array
with a total of $N$ elements, irrespective of the dimension $d$ or the shape of the array.
The only assumption we make is that $N$ is sufficiently composite.
Our algorithm starts and ends in the same data distribution.

We present our multidimensional implementation FFTU which utilizes the sequential FFTW program for its local FFTs,
and which can handle any dimension $d$. We obtain experimental results for $d\leq 5$ using MPI
on up to 4096 cores of the supercomputer Snellius, comparing  FFTU with the parallel FFTW program and with PFFT and heFFTe.
These results show that
FFTU is competitive with the state of the art and that it allows one to use a larger number of processors,
while keeping communication limited to a single all-to-all operation.
For arrays of size $1024^3$ and $64^5$, FFTU achieves a speedup of a factor 149 and 176, respectively, on 4096 processors.
\end{abstract}

\section{Introduction}

The one-dimensional discrete Fourier transform (1D DFT) of length $n$ is a function $F_n$ that takes an array $x$ of $n$ complex numbers, and returns an array $y = F_n(x)$ of $n$ complex numbers. We write $\omega_n$ for the $n$th root of unity $e^{-2\pi i / n}$ and $[n]$ for $\{0, \ldots, n - 1\}$. The DFT is given by
\begin{equation}
\label{eq:dft}
y_k = \sum_{j \in [n]} x_j \omega_n^{jk},\quad \mathrm{for}~ k \in [n].
\end{equation}
The fast Fourier transform (FFT)~\cite{cooley65} is a fast implementation of the DFT that can perform the computation in $\mathcal{O} (n \log n)$ time
instead of the $\mathcal{O} (n^2)$ time of a straightforward implementation of (\ref{eq:dft}); see~\cite{vanloan92} for
a detailed exposition.

We also have a multidimensional variant of the DFT, which takes $n_1 \times \cdots \times n_d$ multidimensional arrays of complex numbers as both input and output, and is given by
\begin{align}
\label{eq:dftd}
Y[k_1,  \ldots , k_d] = \sum_{j_1 \in [n_1]}  \cdots  \sum_{j_d \in [n_d]} X[j_1,  \ldots , j_d] \omega_{n_1}^{j_1k_1} \cdots  \omega_{n_d}^{j_dk_d},\\
\hfill   \mathrm{for}~
(k_1,\ldots,k_d) \in [n_1]\times \cdots \times [n_d]. \nonumber
 \end{align}

We find it convenient in our notation to number the dimensions of the FFT starting at 1, but to have all other numberings start at 0.
Without loss of generality, we assume that $n_1 \geq n_2 \geq \cdots \geq n_d$.
We write $N= n_1\cdots n_d$ for the total number of array elements.

The multidimensional FFT is the main computational kernel in many applications. An example is
the solution of the time-dependent Schr\"{o}dinger equation by wave packet propagation~\cite{kosloff88,leforestier91},
where the FFT is used in a spectral method to compute the kinetic-energy operation efficiently,
and where the dimension increases rapidly with the number of atoms (three per additional atom);
see~\cite{borowski04} for a 6D calculation.
Other examples are classical molecular dynamics (MD), where a 3D FFT is used to compute
long-range interactions efficiently based on Ewald sums, for instance, in the LAMMPS package~\cite{plimpton97}, and
3D computed tomography~\cite{pryor17}.
These are just a few examples; there are numerous other applications
ranging from image processing to fluid flow simulation to
weather and climate prediction.

\subsection{Distributions}

Informally, a data distribution, or \emph{distribution},  is a specification on how to store a data structure divided over a set of processors. Suppose we have a global data structure $X$ indexed by some set $J$, e.g., $J = [n]$ if $X$ is a 1D array of size $n$, or $J = [n_1] \times \cdots \times [n_d]$ for $X$ a $d$-dimensional array of size $n_l$ in dimension $l$. Let $\{P(s) \ | \ s \in S\}$ be a set of processors. We denote the local data structure on $P(s)$ by $X^{(s)}$ and assume it is indexed by a set $J_s$. A distribution is then more formally a bijection $\phi: \sqcup_{s \in S} J_s \to J$ from the disjoint union of local index sets to the global index set,
where $\phi$ is defined by $X^{(s)}[k]$ corresponding to $X[\phi(s, k)]$.

The \emph{cyclic distribution} of a 1D array of length $n$ over a set of processors indexed by $[p]$ is given by $\phi(s, k) = s + kp$.
We generalize this to a $d$-dimensional cyclic distribution over a set of processors indexed by $[p_1] \times \cdots \times [p_d]$ for $d$-dimensional arrays. The distribution is then given by $\phi((s_1, \ldots, s_d), (k_1, \ldots, k_d)) = (s_1 + k_1p_1, \ldots, s_d + k_dp_d)$. When the dimension is clear, we will abbreviate this to $\phi(s, k) = s + kp$.
The cyclic distribution has been shown to be advantageous for the 1D parallel FFT~\cite{inda01},
because it can be used as both the input and the output distribution for the FFT while requiring
only one data redistribution during the computation; see also~\cite{bisseling20} for an extensive discussion.
The cyclic distribution can be used for up to $p = \sqrt{n}$ processors.
It has been shown to achieve state-of-the-art performance in a 1D implementation that uses
FFTW for sequential FFTs and MulticoreBSP for C for communication~\cite{yzelman14}.
The 1D, 2D, and 3D cyclic distributions for 1D, 2D, and 3D arrays, respectively, are illustrated in Fig.~\ref{fig:cyclic_distributions}.

\begin{figure}[ht]
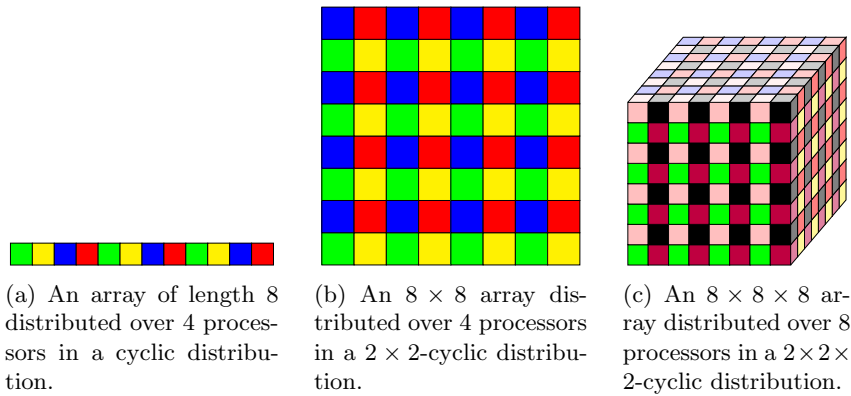

    \centering
     \begin{subfigure}[b]{0.3\textwidth}
        \centering
	    \adjustbox{max width=\textwidth}{
	        \cyclicOneD{2}
        }
	    \caption{An array of length $8$ distributed over $4$ processors in a cyclic distribution.}
    	\label{fig: 1d cyclic distribution}
    \end{subfigure}
    \quad
    \begin{subfigure}[b]{0.3\textwidth}
        \centering
	    \adjustbox{max width=\textwidth}{
	        \cyclicTwoD{3}{3}
        }
	    \caption{An $8 \times 8$ array distributed over $4$ processors in a $2 \times 2$-cyclic distribution.}
	    \label{fig:2d_cyclic_distribution}
    \end{subfigure}
    \quad
     \begin{subfigure}[b]{0.25\textwidth}
        \centering
	    \adjustbox{max width=\textwidth}{
	        \cyclicThreeD{3}{3}{3}
        }
	    \caption{An $8 \times 8 \times 8$ array distributed over $8$ processors in a $2 \times 2 \times 2$-cyclic distribution.}
	    \label{fig:3d_cyclic_distribution}
    \end{subfigure}

    \caption{Cyclic distribution in several dimensions, indicated by colors.}
    \label{fig:cyclic_distributions}
\end{figure}

The \emph{slab distribution} (or \emph{1D distribution}) distributes a multidimensional array by blocks along one dimension. If we have $p$ processors
and distribute along the first dimension, the distribution is given by $\phi(s, (k_1, \ldots, k_d)) = (k_1 + s \cdot \frac{n}{p}, k_2, \ldots, k_d)$. Distribution along another dimension can be done analogously.
This is illustrated in Fig.~\ref{fig:3d_slab_distribution}.

\begin{figure}[ht]
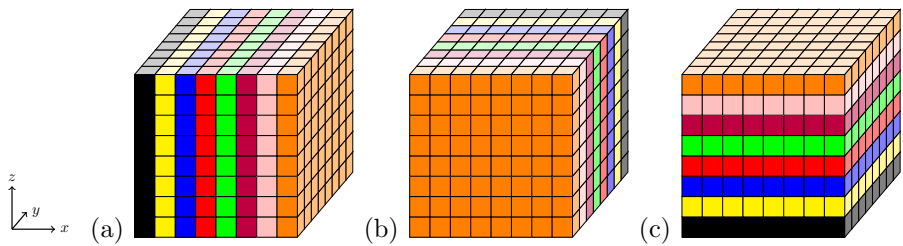

	\adjustbox{max width=0.1\textwidth}{
        \Axes{0}{0}{0}
	}
        (a)
	\adjustbox{max width=0.25\textwidth}{
        \slabThreeDx{0}{7}{7}
	}
	(b)
	\adjustbox{max width=0.25\textwidth}{
        \slabThreeDz{7}{7}{0}
        }
        (c)
        \adjustbox{max width=0.25\textwidth}{
        \slabThreeDy{7}{0}{7}
	}
	\caption{An $8 \times 8 \times 8$ array distributed over $8$ processors in a slab distribution along different dimensions.
	(a) A slab distribution along the $x$-direction;
	(b) along the $y$-direction;
	(c) along the $z$-direction.}
	\label{fig:3d_slab_distribution}
\end{figure}

The \emph{pencil distribution} (or \emph{2D distribution}) is similar to the slab distribution, but it distributes along two dimensions. If the processor indexing set is $[p_1] \times [p_2]$ with $p=p_1p_2$, the pencil distribution is given by $\phi((s_1, s_2), (k_1, \ldots, k_d)) = (k_1 + s_1 \cdot \frac{n_1}{p_1}, k_2 + s_2 \cdot \frac{n_2}{p_2}, k_3, \ldots, k_d)$.
This is illustrated in Fig.~\ref{fig:3d_pencil_distribution}.
The pencil distribution can easily be generalized to a higher-dimensional distribution.

\begin{figure}[ht]
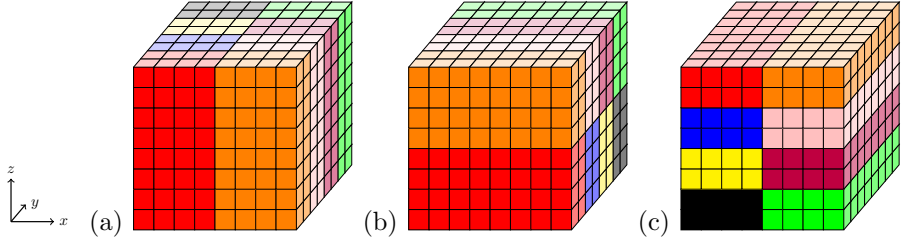

        \adjustbox{max width=0.1\textwidth}{
        \Axes{0}{0}{0}
	}
         (a)
	\adjustbox{max width=0.25\textwidth}{
        \pencilThreeDxz{3}{7}{1}
	}
	(b)
	\adjustbox{max width=0.25\textwidth}{
        \pencilThreeDyz{7}{3}{1}
	}
	(c)
	\adjustbox{max width=0.25\textwidth}{
        \pencilThreeDxy{3}{1}{7}
	}
	\caption{An $8 \times 8 \times 8$ array distributed over $2 \times 4$ processors in a pencil distribution along different dimensions.
	(a) Distribution along $x$ and $y$;
	(b) along $z$ and $y$;
	(c) along $x$ and $z$.
	}
	\label{fig:3d_pencil_distribution}
\end{figure}

\subsection{Related work}\label{sec:literature}

Many state-of-the-art parallel FFT libraries, including the packages FFTW~\cite{frigo05}, PFFT~\cite{pippig13}, and HeFFTe~\cite{ayala20} that we will use in our comparisons,
exploit that the sum describing the FFT factorizes as follows:

\begin{multline}
\sum_{j_1 \in [n_1]  }  \cdots  \sum_{j_d \in [n_d] }  X[j_1,  \ldots , j_d] \omega_{n_1}^{j_1k_1} \cdots  \omega_{n_d}^{j_dk_d} = \\
\sum_{j_1  \in [n_1]} \left( \sum_{j_2 \in [n_2]}  \cdots  \sum_{j_d \in [n_d]} X[j_1,  \ldots , j_d] \omega_{n_2}^{j_2k_2}\cdots   \omega_{n_d}^{j_dk_d} \right) \cdot \omega_{n_1}^{j_1k_1}.
\end{multline}

We see that the $d$-dimensional Fourier transform can be calculated in two steps: first, we apply a $(d - 1)$-dimensional FFT to $X[j_1, \ast, \ldots, \ast]$ for all $j_1$, calling the result $Y$ (usually this is done in place, so with the same data structure for $X$ and $Y$). Then we transform $Y[\ast, k_2, \ldots, k_d]$ by a 1D FFT for all tuples $(k_2, \ldots, k_d)$. This can be applied recursively to calculate a $d$-dimensional FFT by performing 1D transforms along each dimension.
This can be done in parallel using sequential 1D FFTs. See \cite{foster97} for alternative parallel methods such as the binary exchange method and a comparison between these methods.

The approach that parallel FFTW~\cite{frigo05} takes to parallelizing the multidimensional computation is as follows: it starts with a slab distribution and performs a sequential FFT in all directions that are local. In Fig.~\ref{fig:3d_slab_distribution}(a),
this is along the $y$-axis and the $z$-axis. Then it performs a communication step to move to a distribution such that the remaining direction becomes local, for instance, the slab distribution of Fig.~\ref{fig:3d_slab_distribution}(b).
Performing the final step in the slab distribution is not always possible. For example, for a parallel FFT
on an array of size $8 \times 4 \times 2$
that starts  with a slab distribution for 8 processors along the first dimension,
the final step of FFTW uses a $4 \times 2$ pencil distribution.
If needed, the final distribution will have an even higher dimension $r>2$.
Therefore, starting in a slab distribution along the first direction,
with the largest size,
FFTW can use at most $p_{\max} = \min(n_1, n_2\cdots n_d) = \min(n_1, \frac{N}{n_1}) $ processors.
This implies that in the most favorable case, when $n_1=\sqrt{N}$, FFTW can use $ p_{\max} = \sqrt{N}$ processors,
but in all other cases it must use fewer processors.

If we want to use more processors, we can choose to start in a pencil distribution instead of a slab distribution. This
approach was been proposed by Ding, Ferraro, and Gennery~\cite{ding95}.
It has been implemented in 3D packages by Plimpton (see \cite{plimpton97} and also~\cite{plimpton18}),
2DECOMP\&FFT by Li and Laizet~\cite{li10}, and P3DFFT by Pekurovsky~\cite{pekurovsky12}.
PFFT \cite{pippig13} is also based on using a pencil distribution, but generalized to
arbitrary $d \geq 3$.

Having a 2D processor distribution gives us $d - 2$ directions that are locally available. This means we have to switch pencil distributions
$\lceil \frac{d}{d - 2} \rceil - 1 = \lceil \frac{2}{d - 2} \rceil$ times to transform along all $d$ directions.
For $d=3$, this would mean switching twice, as illustrated by Fig.~\ref{fig:3d_pencil_distribution}, and for $d \geq 4$, this would be only once.
For $d=3$, we can use $p_{\max} = \min(n_1n_2,n_2n_3,n_1n_3) = n_2n_3=\frac{N}{n_1}$ processors,
by our assumption on the decreasing sizes of the different dimensions.
In the most favorable case, when all sizes are equal, $p_{\max} = N^{2/3}$; e.g.,
in Fig.~\ref{fig:3d_pencil_distribution}, we could have used up to 64 processors.

PFFT  generalizes earlier work by enabling the use of an $r$-dimensional data distribution with $1 \leq r<d$, not just $r=1,2$,
at the expense of $\lceil \frac{d}{d-r} \rceil - 1 =  \lceil \frac{r}{d - r} \rceil$ data redistributions.
For $d \leq 3$, the slab and pencil distributions suffice; here, we discuss the remaining case $d\geq 4$.
In general, distributing along $r$ dimensions makes the other $d-r$ dimensions locally available.
Assume that the dimensions chosen for distribution have sizes $m_1,\ldots, m_r$ and the others have sizes $m_{r+1},\ldots, m_d$;
thus the $m_l$ are a permutation of the $n_l$.
As a consequence, there are $m_{1} \cdots m_{r}$ FFTs, which gives an upper bound on the  number of processors we can use.
We are left with $m_{r + 1} \cdots m_d$ FFTs  to perform,
which reduces the upper bound to $\min(m_1 \cdots m_{r}, m_{r + 1} \cdots m_d)$ processors.
If we limit ourselves to requiring a single redistribution, this number equals $p_{\max}$,
and in that case we must choose $r \leq d-r$, i.e., $r\leq d/2$.
We can maximize $p_{\max}$ by choosing the
dimensions to be distributed initially such that $m_1 \cdots m_{r}$ is as close  as possible to $m_{r + 1} \cdots m_d$.
For even $d$, and in case all $n_l$ are the same, this means that $p_{\max} = N^{1/2}$ processors.
For all other cases, $p_{\max} \leq N^{1/2}$ is an upper bound.
For odd $d$, and in case all $n_l$ are the same, we can maximize $p_{\max} $ by taking $r=(d-1)/2$,
giving $p_{\max} = N^{(d-1)/(2d)} <N^{1/2}$; e.g., for $d=5$, $p_{\max} = N^{2/5}$.
Thus, if all sizes are equal, having an odd number of dimensions is less favorable.

Both FFTW and PFFT finish their computation in a different distribution from the input distribution.
If this is undesirable, an extra communication step is necessary.
This may be avoided, for instance, when the FFT is followed by a local elementwise computation (such as happens 
in a convolution involving an elementwise multiplication),
and an inverse FFT.
If this inverse FFT can start in the output distribution of the forward FFT and can finish in the original input distribution,
then the extra communication step can be avoided.

Jung et al.~\cite{jung16} present a 3D FFT with a \emph{volumetric} (3D) data distribution for use in an MD application.
In one of their schemes, 1D all-to-all, they perform parallel 1D FFTs within every direction, with five all-to-all communication steps
within one dimension. In an alternative scheme, 2D all-to-all, they require three all-to-all communication steps,
each within one or two dimensions. Because of the requirements of the
MD application of which the FFT is a part, the 3D distribution is by blocks.

Dalcin, Mortensen, and Keyes~\cite{dalcin19} provide new methods for accelerating the all-to-all communication step
of multidimensional FFTs, where they make use of advanced features of MPI-2.

Popovici et al.~\cite{popovici20} generalize the 1D FFT based on the cyclic distribution to a multidimensional FFT
based on a cyclic distribution in all dimensions with the advantage that the input and output distributions are the same.
They argue that to scale up to the large numbers of processors that are available in today's supercomputers,
it it necessary to use parallelism both across 1D FFTs and within 1D FFTs.
In each dimension $l$, their algorithm can then use  $p_l \leq  \sqrt{n_l}$ processors because of the cyclic distribution.
This implies that if all $n_l$ are squares, the maximum number of processors used equals  $p_{\max} = N^{1/2}$ .
Their algorithm performs a communication step that moves all data, once for every dimension, so the algorithm has
$d$ communication steps in total.
In their implementation, they use FFTW for the local computations, MPI for distributed-memory
parallelism, and OpenMP for shared-memory parallelism.
They present results for 3D arrays of size up to $1024^3 $.

The recent package heFFTe~\cite{ayala20} aims at providing highly efficient parallel 3D FFTs suitable for Exascale computers
with heterogeneous architectures.
It implements a parallel algorithm similar to PFFT, with several all-to-all communication steps, and it has capabilities for  CPUs as well as GPUs. 
The heFFTe package can handle various input and output distributions, such as slab, pencil, and 3D blocks (called \emph{bricks}),
and internally it redistributes the data by an all-to-all step which the authors call \emph{tensor transposition}.

\subsection{Contributions}
This article provides the following contributions:
\begin{itemize}
\item We present a parallel multidimensional FFT  algorithm based on the cyclic distribution
that (i) has only a single all-to-all communication step; (ii) works for up to $p_{\max} = \sqrt{N}$ processors,
where $N$ is the total number of array elements; (iii) starts and finishes in the same distribution.
\item We present the program FFTU which is an implementation of this algorithm for an arbitrary dimension $d$
and present timing results for $d = 2,3,5$, demonstrating state-of-the-art performance by comparing
        to FFTW, PFFT, and heFFTe, and showing scalability for higher dimensions.
\end{itemize}

Starting and ending in the same distribution has the following advantages.
First, the same program can be used for the forward and the inverse FFT,
just with the weights conjugated and the outcome scaled by $1/N$, instead of reversing all the steps of the FFT.
Furthermore, the inverse FFT can be performed directly after the FFT,
possibly with an elementwise local operation interspersed, without the need
for data reordering.

\section{Generalized four-step framework}
In this section, we first present the sequential four-step framework for the 1D FFT and derive from this a parallel cyclic-to-cyclic 1D FFT algorithm.
We then generalize this algorithm to the multidimensional case and prove that the resulting parallel multidimensional FFT
algorithm does what it is supposed to do.

\subsection{One-dimensional algorithms}

Suppose that $x$ is an array of length $n$, that $p | n$ and $p | \frac{n}{p}$ (or equivalently $p^2 | n$), and that we want to calculate $y = F_n(x)$. We write $v(a:b:c)$ to mean the strided subarray of $v$ which starts at index $a$ and has stride $b$. The last variable $c$ is the length of the whole array $v$. If we write $v(a:b)$, we mean the subarray that starts at $a$ and ends at $b$ (inclusive).

To establish our notation and lay the basis for generalization to the parallel and multidimensional case, we will first derive our basic sequential algorithm, the so-called \emph{four-step framework}.

\subsubsection{Sequential algorithm}

As any $0 \leq j < n$ can be written uniquely in the form $kp + s$ with $0 \leq k < \frac{n}{p}$, $0 \leq s < p$, we have

\begin{align}\label{eq:radix}
y_a = \sum_{j = 0}^{n - 1} x_j \omega_n^{ja} = \sum_{s = 0}^{p - 1} \sum_{k = 0}^{\frac{n}{p}-1} x_{kp + s} \omega_n^{(kp + s)a} = \\
\sum_{s = 0}^{p - 1} \omega_n^{sa} \sum_{k = 0}^{\frac{n}{p} - 1} x_{kp + s} (\omega_n^{p})^{ka} = \sum_{s = 0}^{p - 1} \omega_n^{sa} \sum_{k = 0}^{\frac{n}{p} - 1} x_{kp + s} \omega_{\frac{n}{p}}^{ka}. \nonumber
\end{align}

Write $x^{(s)}$ for $x(s: p : n)$ and $z^{(s)}$ for $F_{\frac{n}{p}}(x(s:p:n))$. We have the correspondence $x^{(s)}_{k} = x_{kp + s}$ and $x_j = x^{(j \md p)}_{j \divi p}$ and likewise for $z$, $z^{(s)}$. Here, the div operator is defined by
$j \divi p =\lfloor j/p \rfloor$. By periodicity we have $\omega_{\frac{n}{p}}^{ka} = \omega_{\frac{n}{p}}^{k(a \md \frac{n}{p})}$. Armed with this knowledge we can state
\begin{equation}
\sum_{k = 0}^{\frac{n}{p} - 1} x_{kp + s} \omega_{\frac{n}{p}}^{ka} = \sum_{k = 0}^{\frac{n}{p} - 1} x^{(s)}_k \omega_{\frac{n}{p}}^{k(a \md \frac{n}{p})}.
\end{equation}

But this is the definition of the $(a \md \frac{n}{p})$th entry of the DFT of $x^{(s)}$, so we can write (\ref{eq:radix}) as
\begin{equation}
y_a = \sum_{s = 0}^{p - 1} \omega_n^{sa} z^{(s)}_{a \md \frac{n}{p}} ,
\end{equation}
which looks suspiciously much like an FFT of length $p$. In order to massage this equation into that form, we write $a = t\frac{n}{p} + k$ for $0 \leq k < \frac{n}{p}$, $0 \leq t < p$. This yields
\begin{equation}
y_{t\frac{n}{p} + k} = \sum_{s = 0}^{p - 1} \omega_n^{st\frac{n}{p} + sk} z^{(s)}_{k} = \sum_{s = 0}^{p - 1} \omega_p^{st}\left( z^{(s)}_{k} \omega_n^{sk} \right).
\end{equation}
Fixing $k$, we see that $y(k: \frac{n}{p} : n)=F_p(w^{(k)})$, where $w^{(k)}_s = z^{(s)}_k \cdot \omega_n^{sk}$ for $0 \leq s <p$.

We can summarize this as Algorithm~\ref{algo:four_step_sequential}, which is known as  the four-step framework and can be found with a different proof and formulation in \cite{vanloan92}. Step 1 is  called \emph{twiddling} in \cite{vanloan92} and elsewhere. We can easily avoid using extra arrays $z$ and $w$ by reusing $x, y$. The only step where it is not immediately obvious how to do this is Step 3. Here, we note that  $\{y(k : \frac{n}{p} : n) \ | \ 0 \leq k < \frac{n}{p}\}$ splits  $y$ into $\frac{n}{p}$ strided subarrays of length $p$ each, so we can use those to store the $\frac{n}{p}$ arrays $w^{(k)}$ of length $p$.
Therefore, Algorithm~\ref{algo:four_step_sequential} can be executed mostly in-place, where only in Step 2 special care has to be taken
to limit the amount of extra memory needed to permute $z$ into $w$.

\begin{algorithm}
\caption{Sequential four-step framework}\label{algo:four_step_sequential}
Input: $x: $ array of length $n$, number $p$ such that $p^2 | n$. \\
Output: $y: $ array of length $n$, such that $y = F_n(x)$.
\begin{algorithmic}[1]
	\For{$s := 0$ \textbf{ to } $p - 1$} \Comment{Step 0}
		\State $x^{(s)} := x(s: p: n)$;
		\State $z^{(s)} := F_{\frac{n}{p}}(x^{(s)})$;
	\EndFor

	\For{$s := 0$ \textbf{ to } $p - 1$} \Comment{Step 1}
		\For{$k := 0$ \textbf{ to } $\frac{n}{p} - 1$}
			\State $z^{(s)}_k := \omega_n^{ks} z^{(s)}_k$;
		\EndFor
	\EndFor

	\For{$s := 0$ \textbf{ to } $p - 1$} \Comment{Step 2}
		\For{$k := 0$ \textbf{ to } $\frac{n}{p}-1$}
			\State $w^{(k)}_s := z^{(s)}_k$;
		\EndFor
	\EndFor

	\For{$k := 0$ \textbf{ to } $\frac{n}{p} - 1$} \Comment{Step 3}
		\State $y(k: \frac{n}{p}: n) := F_{p}(w^{(k)})$;
	\EndFor
\end{algorithmic}
\end{algorithm}

\subsubsection{Parallel algorithm}

We use the Bulk Synchronous Parallel (BSP) model \cite{valiant90} as our approach to the parallelization of the FFT.
BSP algorithms comprise a sequence of \emph{supersteps}, each performing a computation stage or a communication stage, where each superstep is terminated by a global synchronization.
Following the BSP approach in~\cite{bisseling20},
we express our parallel algorithms in Single Program Multiple Data (SPMD) style, by giving the
program text for processor $P(s)$ with $s \in [p]$. This means that although the executed program depends on $s$, we do not have to write different texts for different processors.
We express the communication in our algorithm by the one-sided primitive ``Put," which completely defines
the transferal of data by an action from the sender. The sender $P(s)$ states to which destination processor
the data has to be sent, and to which memory address on that processor. The data can then be used in the next superstep.
There is no need to define actions by receivers.

We will analyze the time complexity of parallel algorithms within the BSP framework.
The cost of a computation superstep in the BSP model depends on the maximum number of floating-point operations (flops)
that a processor carries out, and we will measure these in real flops, as is usual.
The cost of a communication superstep depends on the maximum number of data words
each processor sends or receives, here complex numbers, and we will take $g$ (for \emph{gap}) as the cost per data word.
The cost of the synchronization at the end of a superstep is taken as $l$ (for \emph{latency}),
but we will charge this cost only for communication supersteps.
(The reason is that we only use Put operations in our program, not Get operations, and this means
that in an actual implementation the synchronization at the end of a computation superstep
can be removed.) For more details on BSP, and an extensive treatment of the BSP 1D FFT
with the cyclic distribution, see~\cite{bisseling20}.

We will now derive a parallel version of the four-step framework. Steps 0 and 1 of Algorithm~\ref{algo:four_step_sequential} are trivial to parallelize, and they determine the distribution: namely, cyclic over $p$ processors.  Here, processor $P(s)$ simply performs iteration $s$ of the outer loops.
This yields Superstep~0  of the parallel 1D algorithm, Algorithm~\ref{algo:four_step_parallel}.
We would like to end in the same cyclic distribution in which we started.
Let $y^{(s)} := y(s: p: n)$ be the part of the output vector $y$ stored by the cyclic distribution on $P(s)$.
Component $c$ of this part is then denoted by $y^{(s)}[c] := y(s: p: n)[c]$.

We will first parallelize the final step of the sequential algorithm, Step 3, aiming at a completely local operation in the cyclic
distribution. Note that in the cyclic distribution over $p$ processors,
global component $y_j$ is stored on processor $P(j \md p)$ in position $j \divi p$.
Consider component $c$ of $y(k : \frac{n}{p} : n)$, the output of the final step. Because $p | \frac{n}{p}$, we have that component $y(k : \frac{n}{p} : n)[c] = y_{c\frac{n}{p} + k}$ is stored on $P((c\frac{n}{p} + k) \md p) = P(k \md p)$ in local position $(c\frac{n}{p} + k) \divi p = c\frac{n}{p^2} + k \divi p$. So $y(k: \frac{n}{p} : n)$ is stored in its entirety on $P(k \md p)$.
Taking $c=0$, we see that  $y(k: \frac{n}{p}: n)[0]$ is stored in local position $k \divi p$. Furthermore,
incrementing $c$ increases the local position by $\frac{n}{p^2}$. Therefore, $y(k: \frac{n}{p} : n) = y^{(k \md p)}(k \divi p: \frac{n}{p^2}: \frac{n}{p})$, which is local on $P(s)$ for $ k \md p = s$. As $k$ ranges from $0$ to $\frac{n}{p} - 1$, we have that $t = k \divi p$ ranges from $0$ to $\frac{n}{p^2} - 1$. This determines the $k$-values for which the output of Step 3 is local,
and we require the input also to be local. This yields Superstep~2, a local transformation of  $y^{(s)}(t: \frac{n}{p^2}: \frac{n}{p})$ by $F_p$ for $0 \leq t < \frac{n}{p^2}$.

Finally, we parallelize Step 2 of the four-step framework, which is a permutation of the data,
and is usually called the \emph{transposition} step. This step translates into a communication superstep that should transfer the  components of the vector $w$ to their desired destination, so that the following superstep becomes local.
Taking $c=s$ in the above, we obtain $y(k: \frac{n}{p} : n)[s] = y^{(k \md p)}(s\frac{n}{p^2} + k \divi p)$.
This means that we need to send the $x^{(s)}_k$ produced in Superstep~0 to $P(k \md p)$ in local position $s\frac{n}{p^2} + k \divi p$. So $x^{(s)}(k : p : \frac{n}{p})$ ends up in $P(k)$ for $0 \leq k < p$. The first element corresponds to $y^{(k)}(s\frac{n}{p^2})$, and every time we increase $k$ by $p$, we increment $k \divi p$, so $x^{(s)}(k : p : \frac{n}{p})$ corresponds to $y^{(k)}(s\frac{n}{p^2}: (s + 1)\frac{n}{p^2} - 1)$. This is achieved by the Put-statement of Superstep~1.
This completes our algorithm. If desired, we can make the algorithm completely in-place by taking $y = x$.

\begin{algorithm}
\caption{Parallel four-step framework for $P(s)$}\label{algo:four_step_parallel}
Input: $x: $ array of length $n$, distr$(x) = $ cyclic over $p$ processors such that $p^2 | n$. \\
Output: $y: $ array of length $n$, distr$(y) = $ cyclic, such that $y = F_n(x)$.
\begin{algorithmic}[1]
	\State $x^{(s)} := F_{\frac{n}{p}}(x^{(s)})$; \Comment{Superstep 0}
	\For{$k := 0$ \textbf{ to } $\frac{n}{p} - 1$}
			\State $x^{(s)}_k := \omega_n^{ks} x^{(s)}_k$;
	\EndFor

	\For{$k := 0$ \textbf{ to } $p - 1$}\Comment{Superstep 1}
		\State Put $x^{(s)}(k : p : \frac{n}{p})$ in $P(k)$ as $y^{(k)}(s\frac{n}{p^2}: (s + 1)\frac{n}{p^2} - 1)$;
	\EndFor

	\For{$t := 0$ \textbf{ to } $\frac{n}{p^2} - 1$}\Comment{Superstep 2}
		\State $y^{(s)}(t : \frac{n}{p^2} : \frac{n}{p}) := F_p(y^{(s)}(t : \frac{n}{p^2} : \frac{n}{p}))$;
	\EndFor
\end{algorithmic}
\end{algorithm}

\subsection{Parallel multidimensional  algorithm}\label{section_multialgorithm}

We will generalize Algorithm~\ref{algo:four_step_parallel} directly to a parallel multidimensional
algorithm, Algorithm~\ref{algo:dd_4-step}, which we prove correct by showing that it computes the multidimensional DFT of (\ref{eq:dftd}). This equation can
concisely be formulated in tensor notation (see~\cite{vanloan92}) as
\begin{equation}
Y = (F_{n_1} \otimes \cdots \otimes F_{n_d})(X).
\end{equation}
We use vector notation to abbreviate dimensionwise strides, subarrays, etc. So we define, for instance,
$X(t:\frac{n}{p^2}:\frac{n}{p}) = X(t_1:n_1/p_1^2:n_1/p_1, \ldots, t_d:n_d/p_d^2:n_d/p_d) $.
The algorithm can be performed in-place using only an array $X$,
but we have used different variables $X, Y, Z, W, V$ to facilitate the proof.

It is no coincidence that
Algorithm~\ref{algo:dd_4-step} combines instances of Algorithm~\ref{algo:four_step_parallel} dimensionwise.
This happens because the multidimensional Fourier transform is the tensor product of 1D Fourier transforms
and because we can combine BSP algorithms for linear functions into a BSP algorithm for the tensor product,
as shown in~\cite{koopman22}. The proof of this powerful observation involves category theory.
Our correctness proof of Algorithm~\ref{algo:dd_4-step}, however, requires no such category theory, and is
based on basic algebraic manipulations and concise notation.

\begin{algorithm}
\caption{Parallel four-step framework for processor $P(s)=P(s_1,\ldots,s_d)$ in $d$ dimensions}\label{algo:dd_4-step}
Input: $X: $ multidimensional array of size $n_1 \times \cdots \times n_d$, distr$(X) = $ $d$-dimensional cyclic over $p_1 \times \cdots \times p_d$ processors such that $p_l^2 | n_l$, for $l=1,\ldots,d$. \\
Output: $V: $ multidimensional array of size $n_1 \times \cdots \times n_d$, distr$(V) = $  $d$-dimensional cyclic, such that $V = (F_{n_1} \otimes \cdots \otimes F_{n_d})(X)$.
\begin{algorithmic}[1]
	\State $Y^{(s)} := (F_{n_1/p_1} \otimes \cdots \otimes F_{n_d/p_d})(X^{(s)})$; \Comment{Superstep 0}
	\For{$k \in [n_1/p_1] \times \cdots \times [n_d/p_d]$}
		\State $Z^{(s)}[k] := (\prod_{l = 1}^d\omega_{n_l}^{k_ls_l}) Y^{(s)}[k]$;
	\EndFor
	\For{$k \in [p_1] \times \cdots \times [p_d]$}\Comment{Superstep 1}
		\State Put $Z^{(s)}(k : p : \frac{n}{p})$ in $P(k)$ as  $W^{(k)}[s\frac{n}{p^2}: (s + 1)\frac{n}{p^2} - 1]$;
	\EndFor
	\For{$t \in [n_1/p_1^2] \times \cdots \times [n_d/p_d^2]$} \Comment{Superstep 2}
		\State $V^{(s)}(t:\frac{n}{p^2}: \frac{n}{p}) := (F_{p_1} \otimes \cdots \otimes F_{p_d})\left(W^{(s)}(t:\frac{n}{p^2}: \frac{n}{p})\right)$;
	\EndFor

\end{algorithmic}
\end{algorithm}

\begin{thm}
Algorithm~\ref{algo:dd_4-step} computes the multidimensional DFT expressed in (\ref{eq:dftd}).
\end{thm}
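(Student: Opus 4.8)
The plan is to imitate the derivation of the one-dimensional Algorithm~\ref{algo:four_step_parallel} given just above, performing the same index splitting and $\omega$-manipulation but now with vector-valued indices, and then to match the resulting formula against the three supersteps of Algorithm~\ref{algo:dd_4-step}. Concretely, I would first write each summation index componentwise as $j_l = \kappa_l p_l + s_l$ with $\kappa_l \in [n_l/p_l]$, $s_l \in [p_l]$, and each output index as $a_l = t_l (n_l/p_l) + k_l$ with $k_l \in [n_l/p_l]$, $t_l \in [p_l]$; both decompositions are unique. Substituting into Eqn~(\ref{eqn:dftd}), splitting $\omega_{n_l}^{(\kappa_l p_l + s_l)a_l} = \omega_{n_l/p_l}^{\kappa_l a_l}\,\omega_{n_l}^{s_l a_l}$, and using the periodicities $\omega_{n_l/p_l}^{\kappa_l a_l} = \omega_{n_l/p_l}^{\kappa_l k_l}$ and $\omega_{n_l}^{s_l a_l} = \omega_{p_l}^{s_l t_l}\,\omega_{n_l}^{s_l k_l}$, one obtains
\[
Y\bigl[t\tfrac{n}{p}+k\bigr] = \sum_{s \in [p_1]\times\cdots\times[p_d]} \Bigl(\prod_{l=1}^{d}\omega_{p_l}^{s_l t_l}\Bigr)\Bigl(\prod_{l=1}^{d}\omega_{n_l}^{s_l k_l}\Bigr)\, \bigl((F_{n_1/p_1}\otimes\cdots\otimes F_{n_d/p_d})(X^{(s)})\bigr)[k],
\]
where $X^{(s)}[\kappa]=X[\kappa p + s]$ is exactly the local array on $P(s)$ in the cyclic distribution. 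The inner tensor-product transform is computed in Superstep~0 as $Y^{(s)}$, and multiplying it by $\prod_l \omega_{n_l}^{s_l k_l}$ gives $Z^{(s)}[k]$; hence, fixing $k$ and setting $U^{[k]}[s] := Z^{(s)}[k]$, the displayed identity reads $Y[t\frac{n}{p}+k] = \bigl((F_{p_1}\otimes\cdots\otimes F_{p_d})(U^{[k]})\bigr)[t]$.

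It then remains to check that Supersteps~1 and~2 together apply $F_{p_1}\otimes\cdots\otimes F_{p_d}$ to each array $U^{[k]}$ and write the outcome back into the cyclic distribution, exactly as in the 1D argument. Here I would use that $p_l^2 \mid n_l$ to compute, for $a = t\frac{n}{p}+k$, that $a_l \bmod p_l = k_l \bmod p_l$ and $a_l \divi p_l = t_l(n_l/p_l^2) + k_l\divi p_l$; writing $k = \tau p + s$ with $\tau\in[n/p^2]$, $s\in[p]$, this shows that the whole family $\{Y[t\frac{n}{p}+k] : t\in[p]\}$ lives on $P(s)$ in the strided slice $V^{(s)}(\tau:\frac{n}{p^2}:\frac{n}{p})$, with the $t$-component at stride position $t$ --- which is precisely the slice Superstep~2 writes. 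On the other side, the Put of Superstep~1 records $W^{(k')}[s'\frac{n}{p^2}+c] = Z^{(s')}[k'+cp]$, and reading this back through the access $W^{(s)}(\tau:\frac{n}{p^2}:\frac{n}{p})$ performed in Superstep~2 yields, as a function of the stride index $\sigma\in[p]$, the value $Z^{(\sigma)}[s+\tau p] = Z^{(\sigma)}[k] = U^{[k]}[\sigma]$. So Superstep~2 transforms exactly $U^{[k]}$ and, chasing the cyclic map $\phi$, deposits its $t$-component at the global index $t\frac{n}{p}+k$. Since every global index of $Y$ has a unique such representation, this yields $V = (F_{n_1}\otimes\cdots\otimes F_{n_d})(X)$.

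The $\omega$-algebra in the first part is routine. The step I expect to need the most care --- and the main obstacle to writing the proof cleanly --- is the index bookkeeping of the second paragraph: handling the componentwise $\divi$ and $\bmod$, the block offsets $s\frac{n}{p^2}$, and the two distinct strided slicings at once, and in particular not conflating the ``source'' processor index with the Superstep-2 loop index when identifying $W^{(s)}(\tau:\frac{n}{p^2}:\frac{n}{p})$ with $U^{[k]}$. This is the multidimensional echo of exactly the delicate part of the 1D derivation preceding Algorithm~\ref{algo:four_step_parallel}.
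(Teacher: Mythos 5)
Your proposal is correct and rests on essentially the same ingredients as the paper's proof --- the componentwise splittings $j_l=\kappa_lp_l+s_l$ and $a_l=t_l(n_l/p_l)+k_l$, the identities $\omega_{n_l}^{\kappa_lp_la_l}=\omega_{n_l/p_l}^{\kappa_lk_l}$ and $\omega_{n_l}^{s_la_l}=\omega_{p_l}^{s_lt_l}\omega_{n_l}^{s_lk_l}$, and the same trace of the Put in Superstep~1 giving $W^{(s)}(\tau:\frac{n}{p^2}:\frac{n}{p})[\sigma]=Z^{(\sigma)}[s+\tau p]$ --- only run in the opposite direction: you derive the multidimensional four-step factorization from Eqn~(\ref{eqn:dftd}) and then match it to the three supersteps, whereas the paper traces the algorithm forward and recognizes the $d$-dimensional DFT in the final merged exponent $(s_l+t_lp_l+k_ln_l/p_l)(j_lp_l+s_l')$. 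Your index bookkeeping (output of Superstep~2 landing at global index $t\frac{n}{p}+k$ with $k=s+\tau p$) agrees with the paper's local--global relation $V^{(s)}[t+k\frac{n}{p^2}]=V[s+tp+k\frac{n}{p}]$, so the argument goes through as written.
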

\begin{proof}
Applying the definition (\ref{eq:dftd}) of the multidimensional DFT for the local array of $P(s)$ gives
\begin{equation}
Y^{(s)}[k] = \sum_{j_1 \in [n_1/p_1] } \cdots \sum_{j_d \in [n_d/p_d ]} X^{(s)}[j] \omega_{n_1/p_1}^{j_1k_1} \cdots \omega_{n_d/p_d}^{j_dk_d}.
\end{equation}

By using  $\omega_{n_l/p_l}^{j_lk_l} = \omega_{n_l}^{j_lk_lp_l}$ and twiddling $Y^{(s)}$, we obtain
\begin{equation}
Z^{(s)}[k] = \sum_{j_1 \in [n_1/p_1]} \cdots \sum_{j_d \in [n_d/p_d]} X^{(s)}[j] \omega_{n_1}^{k_1(j_1p_1 + s_1)} \cdots \omega_{n_d}^{k_d(j_dp_d + s_d)},
\end{equation}
which is the output of Superstep~0.

Now consider component $s'$ of the input array $W^{(s)}(t:\frac{n}{p^2}:\frac{n}{p})$ of Superstep~2.
This component corresponds to $W^{(s)}[t + s'\frac{n}{p^2}] = W^{(s)}(s'\frac{n}{p^2}: (s' + 1)\frac{n}{p^2} - 1)[t]$,
which has been obtained by communication in Superstep~1
of component $Z^{(s')}(s:p:\frac{n}{p})[t] = Z^{(s')}[s + tp]$. This means that for $P(s)$ in the role of receiver of a block of data,
$P(s')$ is the sender.
Therefore,
\begin{multline}
W^{(s)}(t:\frac{n}{p^2}:\frac{n}{p})[s'] =
Z^{(s')}[s + tp] \nonumber \\
 = \sum_{j_1 \in [n_1/p_1]} \cdots \sum_{j_d \in [n_d/p_d]} X^{(s')}[j] \,\omega_{n_1}^{(s_1 + t_1p_1)(j_1p_1 + s_1')} \cdots \omega_{n_d}^{(s_d + t_dp_d)(j_dp_d + s_d')}.
\end{multline}

The final output $ V^{(s)}(t:\frac{n}{p^2}:\frac{n}{p})$ of Algorithm~\ref{algo:dd_4-step}
is computed in Superstep~2.
Consider component $k$ of the output array $ V^{(s)}(t:\frac{n}{p^2}:\frac{n}{p})$,
which can be written as $V^{(s)}[t+ k\frac{n}{p^2} ]$ and is computed by a local DFT,
\begin{multline}
V^{(s)}[t+ k\frac{n}{p^2} ]   = \sum_{s_1' \in [p_1]} \cdots \sum_{s_d'  \in [p_d]}
W^{(s)}(t:\frac{n}{p^2}:\frac{n}{p})[s'] \, \omega_{p_1}^{s_1'k_1} \cdots \omega_{p_d}^{s_d'k_d} \\
\quad = \sum_{s_1' \in [p_1]} \cdots \sum_{s_d'  \in [p_d]} \
\sum_{j_1 \in [n_1/p_1]} \cdots \sum_{j_d \in [n_d/p_d]} X^{(s')}[j] \,
\omega_{n_1}^{(s_1 + t_1p_1)(j_1p_1 + s_1')} \\
\cdots \omega_{n_d}^{(s_d + t_dp_d)(j_dp_d + s_d')}
\omega_{p_1}^{s_1'k_1} \cdots \omega_{p_d}^{s_d'k_d}.
\end{multline}

To work toward the definition of the multidimensional DFT, we rewrite the powers of the roots of unity
using $\omega_{p_l}^{s_l'k_l}  = \omega_{n_l}^{s_l'k_ln_l/p_l} = \omega_{n_l}^{(j_lp_l+s_l')k_ln_l/p_l}$.
Note that  the extra factor $\omega_{n_l}^{j_lp_lk_ln_l/p_l} = \left(\omega_{n_l}^{n_l}\right)^{j_lk_l} = 1$.
This gives the expression
\begin{multline}
\sum_{s_1' \in [p_1]} \cdots \sum_{s_d'  \in [p_d]}
\sum_{j_1 \in [n_1/p_1]} \cdots \sum_{j_d \in [n_d/p_d]} X^{(s')}[j]
\omega_{n_1}^{(s_1 + t_1p_1+ k_1n_1/p_1)(j_1p_1 + s_1')} \cdots \\
\omega_{n_d}^{(s_d + t_dp_d+ k_dn_d/p_d)(j_dp_d + s_d')}.
\end{multline}
Because we can write any number $j_l' \in [n_l]$  uniquely as $j_l' = s_l' + j_lp_l$ for $0 \leq s_l < p_l$, $0 \leq j_l < n_l/p_l$ and because of the local-global relations $X^{(s')}[j] = X[s' + jp]$ and
$V^{(s)}[t+ k\frac{n}{p^2} ] = V[s+tp+k\frac{n}{p}]$, we obtain
\begin{multline}
V [ s_1+t_1p_1+k_1\frac{n_1}{p_1}, \ldots , s_d+t_dp_d+k_d\frac{n_d}{p_d}]
= \\
\sum_{j_1' \in [n_1]} \cdots \sum_{j_d' \in [n_d]} X[j'] \,
\omega_{n_1}^{(s_1 + t_1p_1+ k_1n_1/p_1)j_1'} \cdots \omega_{n_d}^{(s_d + t_dp_d+ k_dn_d/p_d)j_d'}.
\end{multline}
This proves that the corresponding global array $V$ is precisely the image of the global array $X$ under the $d$-dimensional DFT.
\end{proof}

\subsection{Complexity analysis}

We analyze the time complexity of our parallel multidimensional FFT algorithm within the
BSP model~\cite{valiant90}.
Assume we have an array of size $n_1 \times \cdots \times n_d$.
Let $N = n_1 \cdots n_d$ be the total number of array elements and $p=p_1 \cdots p_d$
be the total number of processors. (Note that we slightly abuse the notation,
since we have used the variable $p$ in the previous section in a different meaning, namely as a tuple.)
The sequential multidimensional FFT performs about $5 N \log N$ flops
for $N$ array elements, regardless of dimension or shape of the array.
As the number of operations per array element  is small, CPU--RAM bandwidth and cache architecture matter
more than the precise number of flops, but we will disregard this in our theoretical analysis
and consider it part of the implementation concerns.

Let us first look at the computation supersteps of Algorithm~\ref{algo:dd_4-step}. The first part of Superstep~0 takes $5  \frac{N}{p} \log \frac{N}{p}$ flops, and Superstep~2 takes $\frac{N}{p^2} \cdot 5p \log p$.
With proper implementation, the twiddling takes about two complex multiplications per element, as we will see
in Algorithm~\ref{algo:pack_and_twiddle} in the next section. So this adds $12\frac{N}{p}$ real flops,
giving a total number of flops of
\begin{equation}
T_{\mathrm{comp,FFT}} = 5\frac{N}{p} \log\frac{N}{p} + 5\frac{N}{p}\log p + 12\frac{N}{p}
= 5 \frac{N}{p}  \log N + 12\frac{N}{p}.
\end{equation}
Except for the additional $12\frac{N}{p}$ flops from the twiddling, the original
sequential workload is perfectly divided into $p$ parts.

Looking at the communication superstep, Superstep~1, we note that
each array element is communicated at most once, with each processor sending and receiving at most
$\frac{N}{p}$ elements, at a cost of  $\frac{N}{p} g$. This is a balanced all-to-all communication step.
Since we have only one communication superstep for which we charge the synchronization,
the total synchronization cost of the algorithm is $l$.
Therefore, the total BSP cost of our  parallel multidimensional algorithm is
\begin{equation}
T_{\mathrm{FFT}} =
5 \frac{N}{p} \log N + 12 \frac{N}{p} + \frac{N}{p} g + l.
\end{equation}

Our algorithm has a scalability limit for the maximum number of processors $p_{\max}$ that can be used,
because in dimension $l$ we can use up to $\sqrt{n_l}$ processors;
this is in the ideal case that $p_l^2 | n_l$.
If all $n_l$ are squares, this limit equals
\begin{equation}
p_{\max}   = (n_1 \cdots n_d)^{1/2} = \sqrt{N}.
\end{equation}
If not all $n_l$ are squares, the maximum is a bit lower. In case the $n_l$ and $p_l$ are powers of 2,
$p_{\max}$ is then lowered by a factor of 2 for every $n_l$ that is not a power of 4.
For a 3D array of size $1024^3$, our algorithm can use up to $32^3  =$\,32,768 processors.
For 3D arrays of size $256^3$ and $512^3$, our algorithm can use up to $16^3  =$\,4096 processors.
For a 2D array of size $2^{24} \times 64$, with the same number of elements as the array of size $1024^3$
but with a very high aspect ratio, we can still use the same number of processors,  $p_{\max}=$\,32,768.

It is possible to scale beyond $p_{\max} =\sqrt{N}$, but in that case more than one communication superstep is needed
and a generalization of the cyclic distribution must be used, called the group-cyclic distribution~\cite{inda01};
see also~\cite{bisseling20} for an explanation and an implementation for the 1D FFT.
The  \emph{group-cyclic distribution}
with cycle $c$ splits an array of size $n$ into $\frac{p}{c}$ blocks
of size $\frac{cn}{p}$. It assigns each block
to a group of $c$ processors, and it uses the cyclic distribution within each block.
In this distribution, array element $x_j$ is assigned to processor $P(( j \divi \frac{cn}{p})c + j\md c)$.
This is a different generalization of the cyclic distribution from the
well-known block-cyclic distribution, which is used in certain parallel numerical linear algebra packages,
for instance, in ScaLAPACK~\cite{blackford97}.

\section{Implementation}

In this section, we present the implementation of our multidimensional FFT algorithm, which we humbly call FFTU, the fastest FFT in Utrecht. The implementation can be found at \url{https://gitlab.com/Thomas637/FFT} and it contains both an MPI implementation and a BSPlib implementation.
FFTU has been released under the GNU GPL license. We discuss only the MPI implementation since the BSPlib version is very similar. The only difference is that in the BSPlib version we unpack the data packets manually instead of letting the communication library do this.

We use FFTW for the local computations and MPI for communication. We move the data into a buffer so that the data to be sent to a single processor is stored contiguously; this is called \emph{packing}. We have implemented a method where we send the packets using \texttt{MPI\_alltoall} and then locally rearrange the data, moving them into the correct positions (called \emph{unpacking}), and also a method using \texttt{MPI\_alltoallv}, where we specify how the packet is to be stored on the remote processor using derived data types. This allows (but does not require) the MPI implementation to perform the communication without local data movement.

We combine the packing with the twiddling to minimize the consumption of CPU--RAM bandwidth.
This gives Algorithm~\ref{algo:pack_and_twiddle}, which can be applied to the local array $X^{(s)}[k]$.
Note that we perform only two complex multiplications per data
element in the innermost loop of the algorithm, so that the time complexity is about $12N/p$ real flops.

The twiddle weights $\omega_{n_l}^{k_ls_l} $ used in the FFT algorithm can be precomputed
and stored in a weight table, which requires a local memory of
\begin{equation}
\label{eq:weights}
M_{\mathrm{twiddle}} = \sum_{l=1}^d \frac{n_l}{p_l},
\end{equation}
which is much less than the  $\prod_{l=1}^d \frac{n_l}{p_l} = \frac{N}{p}$ memory needed to store the local array $X^{(s)}$.

\begin{algorithm}[ht]
\caption{Packing and twiddling}\label{algo:pack_and_twiddle}
Input: a $d$-dimensional array $X$ of size $n_1/p_1 \times \cdots \times n_{d}/p_{d}$ in row-major format,
$s = (s_1,\ldots,s_d) \in [p]$. \\
Output: $d$-dimensional arrays packet$_k$ containing the twiddled $X(k:p:\frac{n}{p})$ in row-major format.
\begin{algorithmic}[1]
	\For{$t_1 := 0$ \textbf{ to } $n_1/p_1 - 1$}
		\State factor$_1 := \omega_{n_1}^{t_1s_1}$;
		\For{$t_2 := 0$ \textbf{ to } $n_2/p_2 - 1$}
			\State factor$_2 := \text{factor}_1\cdot \omega_{n_2}^{t_2s_2}$
			\State $\vdots$
			\For{$t_{d} := 0$ \textbf{ to } $n_{d}/p_{d} - 1$}
				\State factor$_{d} := \text{factor}_{d - 1} \cdot \omega_{n_{d}}^{t_{d} s_{d}}$;
				\State packet$_{t \md p}[t \divi p] := X[t] \cdot \text{factor}_{d}$;
			\EndFor
		\EndFor
	\EndFor
\end{algorithmic}
\end{algorithm}

For the sequential multidimensional FFTs that we use in Algorithm~\ref{algo:dd_4-step}, numerous optimized libraries are available, such as FFTW~\cite{frigo05} and SPIRAL~\cite{puschel05}. We have chosen FFTW as it has the most flexible interface, which we exploit for the interleaved strided arrays, but our algorithm can also easily be adapted to use SPIRAL or other libraries. The row-major format of the input and output means
that the last dimension is consecutive in memory, like in the programming language C.

\section{Numerical experiments}

\subsection{Setup}

We performed strong scaling experiments for four different programs: our program FFTU, FFTW, PFFT, heFFTe. The first three are compiled with Intel 2021.2.0,
which includes Intel MPI, and flags -O3 -march=native unless otherwise specified. The FFTW version is 3.3.9. Finally we use PFFT commit e4cfcf9 on \url{https://github.com/mpip/pfft} (no version number available). For $p = 1, 2$, the derived data type version of FFTU with Intel MPI does not terminate. For this reason, we used the manual unpacking method for $p = 1, 2$. We use heFFTe version 2.2 with Intel's Math Kernel Library version 2022.1.0.

FFTW can generate faster FFT functions at the cost of some setup time. This is controlled by a flag, offering choices \texttt{FFTW\_ESTIMATE}, \texttt{FFTW\_MEASURE}, \texttt{FFTW\_PATIENT}. We tested these locally on an array of size $256 \times 256 \times 256$ and the execution (setup) time was $2.331$ ($0.03$), $0.176$ ($2.735$), $0.170$ ($239$) seconds, respectively. \linebreak As \texttt{FFTW\_PATIENT} only pays off after about $40,000$ executions, we chose to use
\linebreak \texttt{FFTW\_MEASURE}. In heFFTe, the use of \texttt{FFTW\_ESTIMATE} is hard coded, and replacing this by \texttt{FFTW\_MEASURE} leads to runtime errors. For this reason, we use Intel's Math Kernel Library, which is also the sequential time reported for heFFTe.

We ran our timing experiments on arrays with a total number of $N= 2^{30}$ elements,
in shape $1024^3$ for all four programs, in shape $64^5$ for FFTU, PFFT, and FFTW,
and in shape 16,777,216\,$ \times$\,64 for FFTU and FFTW.
We obtained timings
on the thin node partition of the supercomputer Snellius at SURFsara in Amsterdam. A thin node consists of two 64-core AMD Rome 7H12 processors in a dual socket configuration, running at 2.6 GHz. There is 2 GiB of RAM available per core. The interconnect is Infiniband HDR100 (100 Gbps) with a fat tree topology. The job scheduler used is SLURM, and we do not share nodes. The operating system is CentOS7. We use a single switch in our experiments and bound the MPI processes to cores with \texttt{--cpu\_bind=cores}.

Ideally, we would write down the time at the start of the FFT and at the end, and subtract the two values to obtain a timing result. The problem with this method is that the synchronizing function \texttt{MPI\_Barrier} of the MPI library only guarantees that no processor leaves the barrier before all processors have entered. So there is no guarantee that every processor starts the FFT function at the same time. That is why we apply the FFT 100 times, so that the small difference in leaving the barrier becomes negligible. For heFFTe, we use the benchmark program \texttt{speed3d\_c2c} provided by the heFFTe package.

For FFTW and PFFT, we measure both the time it takes when we demand that the programs end in the same distribution that we started in
(default for  FFTW and with flag PFFT\_TRANSPOSED\_NONE for PFFT), and when we allow them to end in a different distribution (flags FFTW\_TRANSPOSED\_OUT  and PFFT\_TRANSPOSED\_OUT, respectively). HeFFTe does not provide an explicit option to end in the same distribution.

\subsection{Results}

Table~\ref{table:3d} presents the timing results for an FFT of an array of size $1024^3$ for different numbers of processors $p$.
This represents the 3D problem, which is likely the most common application of a multidimensional FFT,
used in modeling our 3D physical space.
If we require the input and output to be in the same distribution, we see that FFTU performs better than PFFT in all cases,
and better than FFTW for $p \geq 128$. For smaller $p$, FFTW outperforms both PFFT and FFTU, which invoke FFTW
but entail additional overhead. This is particularly clear from the large parallel overhead for $p=1$ when compared to
the sequential case, which represents a slowdown of a factor 2.3 for FFTU and 2.9 for PFFT.
Once we exceed the number of cores in a socket, i.e., for $p>64$, communication becomes more costly
and the advantage of fewer communication supersteps becomes clear.
The highest speedup achieved for
FFTU compared to sequential FFTW is 149$\times$ on  $4096$ processors, and for PFFT it is  98.5$\times$. FFTW can use only 1024 processors, with a speedup of 32.1$\times$. heFFTe achieves a speedup of 119$\times$ on  $4096$ processors, relative to Intel's Math Kernel Library.
Counting $5 N \log N $ flops to transform an array of $N$ elements,
FFTU reaches a top computing rate of $0.946$ Tflop/s for this problem.

If we allow the input and output distribution to be different, both PFFT and FFTW can save the final
communication superstep. For this 3D problem, all three programs
perform a single communication superstep for $p \leq 1024$, and FFTU does this for all given $p$,
because it could go up to $p=$\,32,768 with a single communication superstep.
For $p>1024$, PFFT needs $\lceil \frac{3}{3 - 2} \rceil - 1 = 2$ communication supersteps, because it then uses a 2D decomposition, putting as many processors along the first dimension as possible.
For FFTW,  the maximum number of processors that can be used is 1024.

PFFT with output allowed in a different distribution is about equally fast as FFTU.
It is surprising that this also holds for $p = 2048, 4096$, given the extra superstep PFFT needs.
When inspecting the instance $p = 4096$ for the case with the same distribution imposed,
we noticed that the final communication superstep performing the redistribution
takes as long as the entire calculation,
where we would expect it to be less than half.
This means that PFFT performs its internal two communication supersteps very efficiently.
It does this by using the global transposes of FFTW instead of directly calling MPI,
and optimizing these by precomputing several FFTW plans; see~\cite[Section 3.3]{pippig13}.
Furthermore, we observe a superlinear speedup of PFFT for $p=4096$ compared with $p = 2048$.

In Table~\ref{table:3d}, we see that FFTW with output allowed in a different distribution is faster than FFTU. This can be explained by FFTW doing a better job at communicating in its communication superstep, perhaps by better exploiting the shared memory available. We performed an additional experiment to see whether we could improve
the performance of FFTU. Using OpenMPI version 4.1.1 instead of Intel MPI,  the time needed
decreased from $0.664$ s to $0.515$ s for $p=512$, actually even beating FFTW.
So there is room for improvement in the MPI implementation
and it might be possible to attain the same performance as FFTW.

Although we cannot accurately compare the absolute running times of FFTU with heFFTe due to the different sequential library, 
we can compare the way they both scale. We see similar good scaling for FFTU and heFFTe, even when going to $2048$ processors, where an extra communication step is needed. Like PFFT, heFFTe can hide this communication well. 
We see a similar speedup for FFTU and heFFTe when going to $4096$ processors, 
but in both cases we do not observe the superlinear speedup as shown by PFFT.

\begin{table}[ht]
\caption{Time (in s) of a multidimensional FFT for a $1024^3 $ array using four  programs: FFTU, PFFT, FFTW, and heFFTe.
The program FFTU yields the same data distribution for the output as for the input. For PFFT and FFTW, this can be imposed (``same")
or not (``different"). For comparison, the  time for running the sequential FFTW is also given.}
\begin{center}
\begin{tabular}{rrrrrrr}
\hline
\multicolumn{1}{c}{$p$}& \multicolumn{1}{c}{FFTU}&
    \multicolumn{2}{c}{PFFT} & \multicolumn{2}{c}{FFTW} & \multicolumn{1}{c}{heFFTe} \\
    & \multicolumn{1}{c}{same} & \multicolumn{1}{c}{same} & different & \multicolumn{1}{c}{same} & different & \multicolumn{1}{c}{different}\\
\hline
    seq &&&&17.541& & 32.834 \\
    1 & 40.065& 51.334& 21.646& 23.025& 19.615& -\\
    2 & 18.058& 27.562& 12.359& 13.650& 12.519& 18.385 \\
    4 & 8.074& 13.179 & 6.432& 6.962& 6.236& 15.354 \\
    8 & 3.999& 9.102& 4.290& 4.024& 3.260& 8.167 \\
    16 & 2.349& 5.552& 2.510& 2.388& 1.803& 5.409 \\
    32 & 1.789& 3.190& 1.417& 1.545& 1.145& 3.589 \\
    64 & 1.802& 3.133& 1.411& 1.670& 1.378& 2.814 \\
    128 & 1.366& 3.330& 1.461& 1.996& 1.475& 2.782 \\
    256 & 0.980& 1.972& 0.918& 1.208& 0.797& 1.905 \\
    512 & 0.664& 1.409& 0.677& 0.991& 0.577& 1.236 \\
    1024 & 0.317& 0.644& 0.327& 0.546& 0.310& 0.618 \\
    2048 & 0.163& 0.417& 0.223& && 0.393  \\
    4096 & 0.118& 0.178& 0.088& && 0.277 \\
\hline
\end{tabular}
\end{center}
\label{table:3d}
\end{table}

 Table~\ref{table:5d} shows the results for an array of size $64^5$.
 Here, FFTW can only use up to 64 processors,
 so there are no timings for FFTW beyond this number.
Comparing the 5D timings of this table with the 3D timings of Table~\ref{table:3d},  we see
 that higher-dimensional FFTs run faster for the same total number of array elements.
 The highest speedup achieved by FFTU is 176$\times$ and that of PFFT is 225$\times$.

 In Table~\ref{table:5d}, we see that FFTU performs about as well as PFFT
 when we do not require the output to be in the same distribution as the input.
 This makes sense, as a 2D processor distribution suffices, meaning we need $\lceil \frac{5}{5 - 2} \rceil - 1 = 1$ communication superstep. For $p = 128, 512, 1024, 2048$ FFTU is a  bit faster, but for $p = 256, 4096$ this is the other way round. These differences can be explained by the implementation of the communication step. We simply describe to MPI what we want by use of \texttt{MPI\_Alltoallv} and derived data types, and let the black box MPI library handle the exact algorithm for communication. FFTW and PFFT experiment with several algorithms during the planning phase and then choose the fastest option. Apparently, sometimes MPI chooses the superior algorithm, and sometimes FFTW/PFFT does. We again observe superlinear speedup for PFFT when going from $2048$ to $4096$ processors.
 The same happens when doubling the array size to $128 \times 64^4$,
 so we presume that  this phenomenon is not due to cache size.

\begin{table}[ht]
\caption{Time (in s) of a multidimensional FFT for a $64^5 $ array using three  programs: FFTU, PFFT, and FFTW.
The program FFTU yields the same data distribution for the output as for the input. For PFFT and FFTW, this can be imposed (``same")
or not (``different"). For comparison, the  time for running the sequential FFTW is also given.}
\begin{center}
\begin{tabular}{rrrrrr}
\hline
\multicolumn{1}{c}{$p$}& \multicolumn{1}{c}{FFTU}&
\multicolumn{2}{c}{PFFT} & \multicolumn{2}{c}{FFTW} \\
& \multicolumn{1}{c}{same} & \multicolumn{1}{c}{same} & different & \multicolumn{1}{c}{same} & different \\
\hline
seq &&&&17.381& \\
1 & 36.334& 23.981& 16.134& 18.803& 19.451 \\
2 & 17.843& 14.548& 9.844& 12.690& 11.738\\
4 & 7.771& 7.630& 5.053& 6.826& 6.130\\
8 & 4.111& 4.226& 2.746& 3.538& 3.148\\
16 & 2.372& 2.669& 1.614& 2.119& 1.862\\
32 & 1.653& 2.165& 1.125& 1.593& 1.301\\
64 & 1.634& 2.259& 1.222& 1.390& 0.997\\
128& 1.315& 2.735& 1.551&    &    \\
256& 0.965& 1.650& 0.956&    &    \\
512& 0.609& 1.256& 0.667&    &    \\
1024& 0.304& 0.644& 0.357&    &    \\
2048& 0.167& 0.358& 0.190&    &    \\
4096& 0.099& 0.159& 0.077&    &    \\
\hline
\end{tabular}
\end{center}
\label{table:5d}
\end{table}

Table~\ref{table:2d} shows timing results for an array of size 16,777,216\,$ \times $\,64.
This problem with a high aspect ratio failed for PFFT, because of an integer division-by-zero error.
We did not investigate this case further and just present timings for FFTU and FFTW.
Note that both PFFT and FFTW can use at most 64 processors for this problem,
since they need one dimension to be local.

FFTU is slower here than in the 3D and 5D problems with the same $N$. Our explanation is as follows.
 The total size of the twiddle tables is  $\sum_{l = 1}^d \frac{n_l}{p_l}$ (see (\ref{eq:weights})),
 so in the case where one of the $n_l$ is relatively close to $N$ and the others are small,
 this is a relatively large table,
 which cannot be stored in cache and hence leads to a performance penalty.
 The twiddle tables are computed during the algorithm and then repeatedly used;
 we found no need for precomputing the twiddle weights, as is done for the other weights by FFTW.

\begin{table}[ht]
\caption{Time (in s) of a multidimensional FFT for a 16,777,216\,$\times$\,64 array using two programs: FFTU and FFTW.
The program FFTU yields the same data distribution for the output as for the input. For FFTW, this can be imposed (``same")
or not (``different"). For comparison, the  time for running the sequential FFTW is also given.}
\begin{center}
\begin{tabular}{rrrrrr}
\hline
\multicolumn{1}{c}{$p$}& \multicolumn{1}{c}{FFTU}&
 \multicolumn{2}{c}{FFTW} \\
& \multicolumn{1}{c}{same} & \multicolumn{1}{c}{same} & different \\
\hline
seq &&24.182& \\
1 & 43.146& 26.984& 31.440\\
2 & 21.950& 16.661& 17.382\\
4 & 9.613& 8.649& 8.563\\
8 & 5.150& 4.577& 4.609\\
16 & 3.045& 2.695& 2.699\\
32 & 2.347& 2.023& 1.959\\
64 & 2.218& 1.646& 1.442\\
128 & 1.615&  &   \\
256 & 1.264&  &   \\
512 & 0.841&  &   \\
1024 & 0.331&  &   \\
2048 & 0.230&  &   \\
4096 & 0.204&  &   \\
\hline
\end{tabular}
\end{center}
\label{table:2d}
\end{table}

\section{Conclusions}
We have developed an efficient and scalable multidimensional FFT algorithm
that has a single all-to-all communication superstep, that starts and ends in the same distribution,
and that can use up to $\sqrt{N}$ processors (or slightly less if the array size in some dimensions is not a square number).
We have implemented this algorithm in a program FFTU
and tested it on a distributed-memory parallel architecture,
comparing it to three state-of-the-art programs, FFTW, PFFT, and heFFTe.
Our program FFTU uses FFTW for its local computations and it calls MPI for its communication.
We can view our package FFTU as an extension of FFTW,
in the same way as PFFT can be considered an extension.

Overall, FFTU performs on par with PFFT and FFTW if the output distribution need not be the same
as the input distribution, and FFTU outperforms PFFT and FFTW if these distributions must be the same.
There are some fluctuations in the behavior of the three programs, which we cannot completely explain,
because of shared-memory effects and highly optimized implementations
of PFFT and FFTW incorporating multiple algorithms for global communication.
FFTU outperforms heFFTe with the Intel MKL library as sequential kernel, 
but this might be different for heFFTe with FFTW as its sequential kernel. 

The most important application of our algorithm will most likely be the 3D case,
because of our single all-to-all communication superstep, where PFFT and FFTW need two or even three
in case the output distribution needs to be the same as the input distribution.
Another important application would be the case where the input array is very rectangular,
with one dimension much larger in size than the others. Here, the advantage is better scalability,
because we can still use $\sqrt{N}$ processors, where the other methods are limited by the size of the smallest dimensions.

\section{Future work}

Our package FFTU performs multidimensional complex-to-complex fast Fourier transforms in the cyclic distribution.
For future work, this could be extended to related transforms such as the real-to-complex fast Fourier transform (RFFT),
the discrete sine transform (DST), and the discrete cosine transform (DCT). We can do this by combining existing 1D algorithms dimensionwise~\cite{koopman22}.
To exploit the symmetry of these transforms, the \emph{zig-zag cyclic distribution}~\cite{bisseling20,inda01b}
could be used instead of the cyclic distribution.

All our parallelism is explicit and based on the distributed-memory approach using MPI,
and we do not yet exploit the possible improvements that shared-memory parallelism has to offer. With 128 cores on a processor node, this may yield significant
improvements in our communication superstep. Using OpenMP, or using some of the communication schemes of FFTW for this purpose
could be beneficial. Another research direction would be to exploit better methods for performing the single
communication superstep in our algorithm, either by letting the transposition algorithms of FFTW handle this superstep,
or by using improved redistribution methods~\cite{dalcin19} instead of our vanilla call to MPI.

Some applications perform elementwise multiplications in both domains of the Fourier transform.
In the solution of the time-dependent Schr\"{o}dinger equation on a multidimensional grid,
a wave function is propagated in time by multiplying it pointwise by a potential function in the time domain,
and by a momentum operator in the frequency domain.
This means that we only need  one all-to-all communication superstep
per (forward or inverse) transform. No further communication is required in the propagation.

In other applications, however, such as classical molecular dynamics, the data may have to be stored in a block distribution instead of a cyclic distribution,
to accommodate parts of the application program outside the FFT. This may require additional data redistribution;
further investigation is needed to see how this can be avoided or mitigated.

\section*{Acknowledgments}
We thank SURFsara in Amsterdam, the Netherlands, for giving us access to the supercomputer Snellius
and for help in running our experiments. We also thank the Dutch Research Council (NWO) for funding the computing time
we used  at SURFsara, under project EINF-1158.

\bibliographystyle{abbrvurl}
\bibliography{refs.bib}

\begin{thebibliography}{10}

\bibitem{ayala20}
A.~Ayala, S.~Tomov, A.~Haidar, and J.~Dongarra.
\newblock {{heFFTe}: Highly Efficient {FFT} for Exascale}.
\newblock In V.~V. Krzhizhanovskaya, G.~Z{\'a}vodszky, M.~H. Lees, J.~J.
  Dongarra, P.~M.~A. Sloot, S.~Brissos, and J.~Teixeira, editors, {\em
  Computational Science -- ICCS 2020}, volume 12137 of {\em Lecture Notes in
  Computer Science}, pages 262--275. Springer, Cham, 2020.
\newblock \href {https://doi.org/10.1007/978-3-030-50371-0_19}
  {\path{doi:10.1007/978-3-030-50371-0_19}}.

\bibitem{bisseling20}
R.~H. Bisseling.
\newblock {\em Parallel Scientific Computation: A Structured Approach Using
  {BSP}}.
\newblock Oxford University Press, Oxford, UK, second edition, 2020.
\newblock \href {https://doi.org/10.1093/oso/9780198788348.001.0001}
  {\path{doi:10.1093/oso/9780198788348.001.0001}}.

\bibitem{blackford97}
L.~S. Blackford, J.~Choi, A.~Cleary, E.~D'Azevedo, J.~Demmel, I.~Dhillon,
  J.~Dongarra, S.~Hammarling, G.~Henry, A.~Petitet, K.~Stanley, D.~Walker, and
  R.~C. Whaley.
\newblock {\em {ScaLAPACK} Users' Guide}.
\newblock SIAM, Philadelphia, 1997.
\newblock \href {https://doi.org/10.1137/1.9780898719642}
  {\path{doi:10.1137/1.9780898719642}}.

\bibitem{borowski04}
S.~Borowski and T.~{Kl\"{u}ner}.
\newblock Massively parallel {Hamiltonian} action in pseudospectral algorithms
  applied to quantum dynamics of laser induced desorption.
\newblock {\em Chem. Phys.}, 304(1):51--58, 2004.
\newblock \href {https://doi.org/10.1016/j.chemphys.2004.06.012}
  {\path{doi:10.1016/j.chemphys.2004.06.012}}.

\bibitem{cooley65}
J.~W. Cooley and J.~W. Tukey.
\newblock An algorithm for the machine calculation of complex {F}ourier series.
\newblock {\em Math. Comp.}, 19:297--301, 1965.
\newblock \href {https://doi.org/10.1090/S0025-5718-1965-0178586-1}
  {\path{doi:10.1090/S0025-5718-1965-0178586-1}}.

\bibitem{dalcin19}
L.~Dalcin, M.~Mortensen, and D.~E. Keyes.
\newblock Fast parallel multidimensional {FFT} using advanced {MPI}.
\newblock {\em J. Parallel Distrib. Comput.}, 128:137--150, 2019.
\newblock \href {https://doi.org/10.1016/j.jpdc.2019.02.006}
  {\path{doi:10.1016/j.jpdc.2019.02.006}}.

\bibitem{ding95}
C.~H.~Q. Ding, R.~D. Ferraro, and D.~B. Gennery.
\newblock A portable {3D FFT} package for distributed-memory parallel
  architectures.
\newblock In {\em Proceedings of the Seventh {SIAM} Conference on Parallel
  Processing for Scientific Computing}, pages 70--71. {SIAM}, 1995.

\bibitem{foster97}
I.~T. Foster and P.~H. Worley.
\newblock Parallel algorithms for the spectral transform method.
\newblock {\em SIAM J. Sci. Comput.}, 18(3):806–837, 1997.
\newblock \href {https://doi.org/10.1137/S1064827594266891}
  {\path{doi:10.1137/S1064827594266891}}.

\bibitem{frigo05}
M.~Frigo and S.~G. Johnson.
\newblock The design and implementation of {FFTW3}.
\newblock {\em Proc. {IEEE}}, 93(2):216--231, 2005.
\newblock \href {https://doi.org/10.1109/JPROC.2004.840301}
  {\path{doi:10.1109/JPROC.2004.840301}}.

\bibitem{inda01}
M.~A. Inda and R.~H. Bisseling.
\newblock A simple and efficient parallel {FFT} algorithm using the {BSP}
  model.
\newblock {\em Parallel Comput.}, 27(14):1847--1878, 2001.
\newblock \href {https://doi.org/10.1016/S0167-8191(01)00118-1}
  {\path{doi:10.1016/S0167-8191(01)00118-1}}.

\bibitem{inda01b}
M.~A. Inda, R.~H. Bisseling, and D.~K. Maslen.
\newblock On the efficient parallel computation of {L}egendre transforms.
\newblock {\em SIAM J. Sci. Comput.}, 23(1):271--303, 2001.
\newblock \href {https://doi.org/10.1137/S1064827599355864}
  {\path{doi:10.1137/S1064827599355864}}.

\bibitem{jung16}
J.~Jung, C.~Kobayashi, T.~Imamura, and Y.~Sugita.
\newblock Parallel implementation of {3D FFT} with volumetric decomposition
  schemes for efficient molecular dynamics simulations.
\newblock {\em Comput. Phys. Commun.}, 200:57--65, 2016.
\newblock \href {https://doi.org/10.1016/j.cpc.2015.10.024}
  {\path{doi:10.1016/j.cpc.2015.10.024}}.

\bibitem{koopman22}
T.~Koopman.
\newblock {The Tensor product of Bulk Synchronous Parallel Algorithms}.
\newblock Master's thesis, Utrecht University, Utrecht, The Netherlands, Jan.
  2022.

\bibitem{kosloff88}
R.~Kosloff.
\newblock Time-dependent quantum-mechanical methods for molecular dynamics.
\newblock {\em J. Phys. Chem.}, 92:2087--2100, 1988.
\newblock \href {https://doi.org/10.1021/j100319a003}
  {\path{doi:10.1021/j100319a003}}.

\bibitem{leforestier91}
C.~Leforestier, R.~H. Bisseling, C.~Cerjan, M.~D. Feit, R.~Friesner,
  A.~Guldberg, A.~Hammerich, G.~Jolicard, W.~Karrlein, H.-D. Meyer, N.~Lipkin,
  O.~Roncero, and R.~Kosloff.
\newblock A comparison of different propagation schemes for the time dependent
  {S}chr{\"{o}}dinger equation.
\newblock {\em J. Comput. Phys.}, 94(1):59--80, 1991.
\newblock \href {https://doi.org/10.1016/0021-9991(91)90137-A}
  {\path{doi:10.1016/0021-9991(91)90137-A}}.

\bibitem{li10}
N.~Li and S.~Laizet.
\newblock A highly scalable {2D} decomposition library and {FFT} interface.
\newblock In {\em Proceedings of the Cray User Group 2010 Conference}, pages
  1--13, 2010.

\bibitem{pekurovsky12}
D.~Pekurovsky.
\newblock {P3DFFT}: A framework for parallel computations of {Fourier}
  transforms in three dimensions.
\newblock {\em SIAM J. Sci. Comput.}, 34(4):C192--C209, 2012.
\newblock \href {https://doi.org/10.1137/11082748X}
  {\path{doi:10.1137/11082748X}}.

\bibitem{pippig13}
M.~Pippig.
\newblock {PFFT}: An extension of {FFTW} to massively parallel architectures.
\newblock {\em SIAM J. Sci. Comput.}, 35(3):C213 -- C236, 2013.
\newblock \href {https://doi.org/10.1137/120885887}
  {\path{doi:10.1137/120885887}}.

\bibitem{plimpton18}
S.~Plimpton, A.~Kohlmeyer, P.~Coffman, and P.~Blood.
\newblock {fftMPI}, a library for performing 2d and 3d {FFTs} in parallel,
  2018.
\newblock \href {https://doi.org/10.11578/dc.20201001.68}
  {\path{doi:10.11578/dc.20201001.68}}.

\bibitem{plimpton97}
S.~Plimpton, R.~Pollock, and M.~Stevens.
\newblock Particle-mesh {Ewald} and {rRESPA} for parallel molecular dynamics
  simulations.
\newblock In {\em Proceedings of the Eighth {SIAM} Conference on Parallel
  Processing for Scientific Computing}. {SIAM}, Philadelphia, 1997.

\bibitem{popovici20}
D.~T. Popovici, M.~D. Schatz, F.~Franchetti, and T.~M. Low.
\newblock A flexible framework for multidimensional {DFTs}.
\newblock {\em SIAM J. Sci. Comput.}, 42(5):C245--C264, 2020.
\newblock \href {https://doi.org/10.1137/19M1288401}
  {\path{doi:10.1137/19M1288401}}.

\bibitem{pryor17}
A.~Pryor~Jr., Y.~Yang, A.~Rana, M.~Gallagher-Jones, J.~Zhou, Y.~H. Lo,
  G.~Melinte, W.~Chiu, J.~A. Rodriguez, and J.~Miao.
\newblock {GENFIRE}: A generalized {F}ourier iterative reconstruction algorithm
  for high-resolution {3D} imaging.
\newblock {\em Sci. Rep.}, 7, 2017.
\newblock \href {https://doi.org/10.1038/s41598-017-09847-1}
  {\path{doi:10.1038/s41598-017-09847-1}}.

\bibitem{puschel05}
M.~{Puschel}, J.~M.~F. {Moura}, J.~R. {Johnson}, D.~{Padua}, M.~M. {Veloso},
  B.~W. {Singer}, {Jianxin Xiong}, F.~{Franchetti}, A.~{Gacic}, Y.~{Voronenko},
  K.~{Chen}, R.~W. {Johnson}, and N.~{Rizzolo}.
\newblock {SPIRAL}: Code generation for {DSP} transforms.
\newblock {\em Proc. {IEEE}}, 93(2):232--275, 2005.
\newblock \href {https://doi.org/10.1109/JPROC.2004.840306}
  {\path{doi:10.1109/JPROC.2004.840306}}.

\bibitem{valiant90}
L.~G. Valiant.
\newblock A bridging model for parallel computation.
\newblock {\em Comm. {ACM}}, 33(8):103--111, 1990.
\newblock \href {https://doi.org/10.1145/79173.79181}
  {\path{doi:10.1145/79173.79181}}.

\bibitem{vanloan92}
C.~Van~Loan.
\newblock {\em Computational Frameworks for the Fast Fourier Transform}.
\newblock SIAM, Philadelphia, 1992.
\newblock \href {https://doi.org/10.1137/1.9781611970999}
  {\path{doi:10.1137/1.9781611970999}}.

\bibitem{yzelman14}
A.~N. Yzelman, R.~H. Bisseling, D.~Roose, and K.~Meerbergen.
\newblock Multicore{BSP} for {C}: a high-performance library for shared-memory
  parallel programming.
\newblock {\em Internat. J. Parallel Programming}, 42(4):619--642, 2014.
\newblock \href {https://doi.org/10.1007/s10766-013-0262-9}
  {\path{doi:10.1007/s10766-013-0262-9}}.

\end{thebibliography}

\end{document}